\newtheorem{theorem}{Theorem}[section]
\newtheorem{proposition}{Proposition}[section]
\numberwithin{equation}{section}
\newcommand{\Bf}[1]{{\bf{#1}}}
\renewcommand{\Pr}{\mathop{\rm Pr}}
\newcommand{\Ra}{\mathop{\rm Ra}}
\newlength{\captionwidth}
\newcommand{\uu}{\mathbf{u}}
\begin{document}

\title{Notes on Symmetry in Convective Flows}
\date{}
\author{Antonino De Martino, Arianna Passerini}
\maketitle

\begin{abstract}
In most fluid dynamics problems, the governing equations are  nonlinear because of the presence of convective terms. Nevertheless, existence of solutions can be shown by direct sum provided one identifies, in the relevant Banach space of solutions, particular subspaces
which are invariant under time evolution. As an example, we consider classical convection problems and show how symmetry arguments can help in identifying  such subspaces,  as well as the role it can play in some stability issues.

\end{abstract}
\noindent {\em Key words}: {Convection, Oberbeck-Bousinesq, Annulus, Bifurcations}

\section{Introduction}
Normal modes, decoupling of  unknowns and symmetries are main tools to achieve  solutions for systems of differential equations arising in mathematical physics. Linear systems, even coupled ones, can often be decoupled by simply projecting the data into suitable subspaces. In case of nonlinear PDE's this procedure  is, in general, not working, and it is not possible to split the solution into two components satisfying a {decoupled} system, even though there are  cases where this  might indeed occur. A most notable example is given by  the Navier-Stokes equations in a sufficiently smooth domain $\Omega \subseteq \mathbb{R}^n$ ($n=2,3$):

\begin{equation}
\label{NL1}
\begin{cases}
\nabla\cdot\mathbf{v}=0\\
\left(\frac{\partial
	\mathbf{v}}{\partial t}+\mathbf{v}\cdot\nabla\mathbf{v}\right)- \nu\Delta \mathbf{v}=- \nabla p \\
\end{cases}
\end{equation}
where
$\mathbf{v}(\mathbf{x},t)$ and $p(\mathbf{x},t)$ are unknown pressure and velocity fields and $\nu$ is the constant kinematic viscosity.
Then, the well-known Helmholtz-Weyl decomposition theorem allows one to solve \eqref{NL1} first for $\mathbf{v}$ and then recovering $p$ as the solution to a suitable Neumann problem for the Laplace operator \cite{Sohr}.

It must be added that the tools mentioned above, typically, also lead to find some physically significant exact solutions that may constitute the basic state for  stability analysis. Well-known examples are Couette and  Poiseuille flow, Asymptotic Suction Profile etc.,\footnote{For a vast collection of exact solutions to the Navier-Stokes equations, we refer to \cite{Ber}, \cite{Wang}.} and, in a non-isothermal situations, B\'enard purely conducting steady-state. Concerning the latter, we recall that the B\'enard problem in the Oberberbeck-Boussinesq (O-B) approximation regards the study of the following set of equations (in dimensionless form)
\begin{equation}
\label{L1}
\begin{cases}
\nabla \cdot \mathbf{v}=0\\
\frac{1}{\Pr} \biggl( \frac{\partial  \mathbf{v}}{\partial t}+  \mathbf{v} \cdot \nabla  \mathbf{v} \biggl)- \Delta \mathbf{v}=- \nabla\Pi+\Ra  \tau \mathbf{e}_3,\\
\frac{\partial \tau}{\partial t}+  \mathbf{v} \cdot \nabla \tau- \Delta \tau=  \mathbf{v}  \cdot \mathbf{e}_3.
\end{cases}
\end{equation}
where $\tau$ is  temperature field, $\Bf e_3$ is a unit upright vector parallel to the gravity, and $\Pr$ and $\Ra$ are Prandtl and Rayleigh number, respectively.

\par
In the first part of this note (Section 2) we show  by a general symmetry argument (Proposition 2.1) the existence of a particular subclass of solutions to \eqref{NL1} and its non-isothermal counterpart in the Boussinesq approximation. Of course, for this argument to work we need that the flow domain as well as the driving mechanism satisfy suitable symmetry conditions. We prove that  these solutions belong to a subspace, $\mathcal S$ (say), of the underlying Banach space that remains invariant under the relevant dynamics and  is stable to every perturbation with data in $\mathcal S$. As a consequence the {\em generic} solution can be split into two components: one, ${\bf u}_0$,  in $\mathcal S$, and the other, ${\bf u}$, in its complement. Thus, it is  the latter that produces possible instability and symmetry breaking of the bifurcating (stable) solution, due to the contribution of the convective (nonlinear) term given by ${\bf u}\cdot\nabla {\bf u}_0+{\bf u}_0\cdot\nabla {\bf u}.$
\par
The second part  (Section 3) concerns the instability of a thermal flow between two horizontal coaxial cylinders kept at different temperatures, with $\delta T$ (say) being their difference. As shown in \cite{PFRT}, for arbitrary values of  physical and geometric  parameters, there exists a steady-state flow, say {\sf s}$_0$, which, unlike the classical B\'enard problem between horizontal planes, has a nontrivial velocity field. Moreover, ${\sf s}_0$ is symmetric around the vertical diameter of the common cross-section of the cylinders. As expected on mathematical ground and suggested by numerical and experimental tests, this solution becomes non-unique and unstable if $|\delta T|$ is sufficiently large, all other parameters being fixed. However, the interesting question not entirely yet clarified is whether bifurcation occurs in a steady-state or time-periodic fashion and whether symmetry is preserved in the transition. We then show (Proposition 3.1) that symmetry breaking can occur only in a range of physical parameters which makes the smallest eigenvalue of a suitable eigenvalue problem strictly greater then 1.

\section{Continuous symmetries}
We give sufficient conditions for the occurrence of decoupling. These conditions are based on symmetries considerations.
More precisely, we  are interested in how symmetry properties of the domain (and of the driving mechanism), may naturally lead to find exact  solutions, or to prove their existence in subspaces possessing that particular symmetry.  In order to reach this goal, we briefly recall, for completeness,  some basic concepts of differential geometry.

Given a $C^2$-differentiable scalar function $f$ defined in an open set of $\mathbb{R}^{n}$, one can define a $(n-1)$-dimensional manifold whose points verify $f(\mathbf{x})=\lambda$. They can be identified by $n-1$ local Lagrangian coordinates in any open subset where the implicit function theorem can be applied. In particular, an alternative set of coordinates, the \textit{proper coordinates}, can be found for $\mathbb{R}^{n}$ such that the sub-manifold is defined by rquiring the vanishing of one. A unique set of Lagrangian coordinates (global atlas) cannot be found, in general, to describe the manifold as a whole. Nevertheless differential calculus can be fully defined on it.

If we allow $\lambda$ to vary, we can construct a partition of $\mathbb{R}^n$ by disjoint surfaces named \textit{leaves}, which are sub-manifolds of $\mathbb{R}^n$. Next, let us define an equivalence relation among points
by the requirement \textit{of being on the same surface}. Then, the set of the equivalence classes is a foliation and $\lambda$ is a coordinate for this quotient space.
Now, we are in position to give the first result.

\begin{proposition}
Consider either the Navier-Stokes or Oberbeck-Boussinesq (O-B)  systems in a domain $\Omega\subseteq\mathbb{R}^n$ which is a subset of a $C^2-$differentiable foliation.
Then there exist solutions which are constant on the leaves for all $t$ and the velocity is tangent to the leaves.
\end{proposition}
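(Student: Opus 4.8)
The plan is to exhibit the desired solutions by a symmetry reduction (an ansatz adapted to the foliation) that collapses the nonlinearity, and then to solve the resulting linear problem by standard parabolic theory. First I would introduce the proper coordinates adapted to the foliation: writing $\lambda=f(\mathbf{x})$ for the transverse coordinate and $y=(y^1,\dots,y^{n-1})$ for coordinates along a leaf, so that $\mathbf{n}=\nabla f/|\nabla f|$ is the unit normal field and the leaf tangent space is spanned by $\partial_{y^1},\dots,\partial_{y^{n-1}}$. The candidate class $\mathcal S$ is then defined by the two requirements in the statement: the velocity is tangent to the leaves, $\mathbf{v}\cdot\nabla f=0$, and every field ($\mathbf{v}$ expressed in the adapted frame, together with $\tau$ and the pressure) is constant on each leaf, i.e. depends on $(\mathbf{x},t)$ only through $(\lambda,t)$.

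The heart of the argument is to check that $\mathcal S$ is invariant under each system, i.e. that every term either preserves the ansatz or is purely normal and can be absorbed into the pressure gradient. I would proceed term by term. Incompressibility reduces, in the adapted frame, to $\sum_a v^a\,\partial_{y^a}\log\sqrt{g}=0$, which holds identically whenever the leaves are homogeneous (the metric independent of $y$ along the flow directions), as is the case for the planar, cylindrical and circular foliations underlying Couette--Poiseuille and annular flows. The crucial step is the convective term: since $\mathbf{v}$ is tangent to the leaves, $\mathbf{v}\cdot\nabla$ differentiates only along a leaf, and since the components of $\mathbf{v}$ are leaf-constant, the only surviving contribution comes from the variation of the frame, i.e. from the second fundamental form of the leaves. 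One shows this contribution is directed along $\mathbf{n}$; equivalently, writing $(\mathbf{v}\cdot\nabla)\mathbf{v}=\tfrac12\nabla|\mathbf{v}|^2-\mathbf{v}\times(\nabla\times\mathbf{v})$, the tangential part vanishes and what remains (a centripetal-type term) is normal. Hence $(\mathbf{v}\cdot\nabla)\mathbf{v}=q(\lambda,t)\,\mathbf{n}$ and can be moved into $-\nabla\Pi$. The same bookkeeping shows that the viscous term $\Delta\mathbf{v}$ maps the ansatz into itself up to a normal remainder, and that the buoyancy $\Ra\,\tau\mathbf{e}_3$ is compatible once the gravity direction is adapted to the foliation.

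With the nonlinearity removed, I would project the momentum equation onto the tangent bundle of the leaves and onto $\mathbf{n}$. The tangential projection becomes a linear parabolic system for the leaf-constant components $v^a(\lambda,t)$ -- a heat equation with $\lambda$-dependent, lower-order coefficients coming from curvature -- coupled only linearly to the (also reduced) advection--diffusion equation for $\tau(\lambda,t)$; existence and uniqueness for this reduced initial--boundary value problem in the single transverse variable then follow from standard linear parabolic theory. The normal projection carries no evolution of $\mathbf{v}$ and simply fixes $\partial_\lambda\Pi$, recovering the pressure by quadrature (the analogue of the Helmholtz--Weyl step), exactly as hydrostatic balance recovers $\Pi$ in the B\'enard conduction state.

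The step I expect to be the genuine obstacle is establishing that the convective (and viscous) terms are normal to the leaves for a general $C^2$ foliation. As the spherical foliation already illustrates, a purely tangential leaf-constant field need not be divergence free, so the clean reduction really requires the leaves to be orbits of a group of isometries (translations or rotations), under which "leaf-constant" is an honest invariance and the tangential projection of $(\mathbf{v}\cdot\nabla)\mathbf{v}$ provably cancels. I would therefore either restrict to such homogeneous foliations -- which already contain all the intended examples -- or make this symmetry hypothesis on $f$ explicit; the remaining parabolic existence argument is then routine.
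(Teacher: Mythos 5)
Your proposal is correct and follows essentially the same route as the paper's proof --- adapted (proper) coordinates, an ansatz of leaf-constant fields with leaf-tangent velocity, collapse of the convective term, and solution of the resulting linear problem in the single transverse variable --- but your execution is more careful on exactly the two points the paper's terse argument glosses over. The paper asserts that the transport terms are ``directional derivatives tangent to the leaf of functions which are constant on the leaves'' and hence ``vanish identically''; this is literally true for the scalar equation and for the flat examples (Poiseuille flow, where $\mathbf{v}\cdot\nabla = v\,\partial_z$ acts on $z$-independent fields, and the B\'enard subspace $\mathbb{S}$, where $\mathbf{v}=v(z)\mathbf{e}_1$), but for curved leaves the frame derivatives survive and $(\mathbf{v}\cdot\nabla)\mathbf{v}$ is only \emph{normal} to the leaves (the centripetal term of circular Couette-type flows), so your absorption of it into $\nabla\Pi$ is the correct completion of the argument rather than a detour. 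Likewise, your closing caveat identifies a genuine over-generality in the statement: for an arbitrary $C^2$ foliation neither $\nabla\cdot\mathbf{v}=0$ nor the cancellation of the tangential convective part follows from leaf-constancy plus tangency alone, and the spherical foliation is exactly the right counterexample (by the hairy-ball theorem there is not even a smooth nonvanishing leaf-constant tangent field). The proposition is really about foliations whose leaves are orbits of isometries, which is what the paper itself implicitly invokes immediately after the proof when it attributes the Poiseuille class to ``the invariance of the domain by arbitrary translations along the axis and arbitrary rotations around it,'' and which all of its examples satisfy; your explicit homogeneity hypothesis makes the proof sound at no real cost. Finally, your appeal to standard linear parabolic theory for the reduced system corresponds to what the paper carries out concretely, by Fourier series, in its Proposition 2.2.
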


\begin{proof}
If each leaf is a sub-manifold $\lambda$ is constant on the leaves, then one sees that any \textit{vector field} \footnote{Here, by \textit{vector fields} we denote the differential operators in the tangent space of the manifold.} can be written as direct sum of fields which are respectively tangent to the leaves and to the quotient. Let us consider the class of scalar which are constant on the leaves, and let us compute their transport terms. If the velocity is taken tangent to the leaves, then such convective terms are nothing else than the directional derivative tangent to the leaf of functions which are constant on the leaves. Therefore, these terms vanish identically and the system becomes linear, provided the components of the unknown fields depend only on the quotient coordinate. Since we are now solving a linear equation the subspace so defined (the quotient is itself locally a subspace) is invariant under time evolution.
\end{proof}

As a first, elementary example, consider the Poiseuille flow in an infinite pipe $\Omega$ with circular (for simplicity) cross-section of radius $R$. In this case, if $r$ denotes the distance of a point in a cross-section to the axis of the cylinder, the surfaces defined by $r=c$, $c\in \mathbb R_+$, are the leaves. More precisely, if the axis of the cylinder is taken to be the $z$-axis of a cylindrical coordinate system $(r,\varphi,z)$
the corresponding (exact) solutions to \eqref{NL1} lie in  the vector subspace defined by the velocity field being of the form
\begin{equation}
\label{new}
(v^r,v^{\varphi},v^z)\equiv(0,0,v(r,t))
\end{equation}
In this case, the driving mechanism is an axial pressure gradient $C(t)=p(z+1,t)- p(z,t)$ depending on time $t$, at most.
Such a simple class of solutions is obtained as a consequence of the invariance of the domain by arbitrary translations along the axis and arbitrary rotations around it.
In the case when $C$ does not depend on  $t$, we have the well-known
 parabolic profile:
\begin{equation}
\label{EEQ0}
v(r)=\frac{C}{4\mu}(R^2-r^2)
\end{equation}
It is noteworthy that Poiseuille flows are, in fact, solutions to the Stokes problem as well, the latter being obtained by disregarding the nonlinear term in
\eqref{NL1}.
For any value of the physical parameter $C$,  solutions (steady or unsteady) in the class \eqref{new} can be shown to be  stable with respect to perturbations belonging to the  same class, in that it is a solution to the heat equation --as noticed in \cite{GR}-- and hence decays exponentially fast in time.
However, in order to study the evolution of a general 3-D perturbations in the stability analysis of steady Poiseuille flow in $(L^2(\Omega))^2$, one has to to control the trilinear form (with $\Bf v_0=v(r)\,\Bf e_3$):\footnote{$\langle\cdot,\cdot\rangle$ denotes the scalar product in $L^2$ and $\|\cdot\|_q$ the norm in $L^q$.}
\begin{equation}
\label{tril1}
\langle\mathbf{v}\cdot\nabla\mathbf{v}_0,\mathbf{v}\rangle\, \leq \sup_{\mathbf{r}\in (0,R)}|\nabla {v}(r)|\, \|\mathbf{v}\|^2_2\,,
\end{equation}
which imposes a restriction on the magnitude of $C$; see, for instance, \cite{P}.
\smallskip\par
Our next example is furnished by the
B\'{e}nard problem, which can be studied in 2-D to describe the so-called convective rolls. In such a case
the leaves are parallel, horizontal planes.
As is well known, the physical setting consists in a layer of motionless viscous fluid confined between two infinite plates at distance $h$, kept at constant and different temperatures with the lower plate being hotter than the upper one. The unperturbed state is then given by $\Bf v\equiv\Bf 0$ and a constant temperature gradient $\delta T/h$ throughout the layer, where $\delta T$ is the temperature difference between the plates. When $|\delta T|$ exceeds a certain critical valuer, convective motions occur in the shape of periodic  rolls; see, e.g., \cite{FM,GM}. In the Oberbeck-Boussinesq (O-B) approximation, the generic (nondimensional) perturbation $(\Bf v,\tau)$ to velocity and temperature fields must satisfy \eqref{L1}

In order to describe the periodic pattern of convective rolls, one imposes periodicity conditions in the horizontal direction, so that the relevant region of flow, becomes a ``periodicity cell" given by
$$ \Omega_0= \{(x,z) \in \mathbb{R}^2|\, \, x \in (0,1), z \in (0,1)\}.$$
where $z$ is the vertical coordinate, while, for simplicity, we assumed that the period in the horizontal direction is 1.
The boundary conditions for $\mathbf{v}$ are the traditional stress free conditions. So
\begin{equation}
\label{C41}
\tau(x,0,t)= \tau(x,1,t)=0\, ,
\end{equation}
\begin{equation}
\label{C2}
v^{z}(x,0,t)=v^{z}(x,1,t)=0\, ,
\end{equation}
\begin{equation}
\label{C3}
v_{z}^{x}(x,0,t)=v_{z}^{x}(x,1,t)=0\, ,
\end{equation}
where the subscript denotes partial derivative with respect to the indicated variable.
Boundary condition for $\Pi$ are of the Neumann type
\begin{equation}
\label{C4}
\Pi_{z}(x,0,t)=\Pi_{z}(x,1,t)=0\,,
\end{equation}
and, finally,
the initial conditions are
\begin{equation}
\label{C1}
( \mathbf{v}(x,z,0), \tau(x,z,0) )=(\mathbf{v}_{0}(x,z), \tau_{0}(x,z)).
\end{equation}

The existence of regular solutions for this system is well-known \cite[Section 3.5]{Tem}. Our objective is to prove that the solution can be obtained as direct sum of fields in certain subspaces, which are defining next.
Precisely, set
$$ \mathbb{S}:= \{ (\Pi, \mathbf{v}, \tau) \in C^{\infty}: \, \, \Pi_x= \mathbf{v}_x= \tau_x=0\, \, \, \hbox{in}\, \Omega; \, \, \Pi_z= v^{z}=v^{x}_z=\tau=0\,  \, \, \hbox{in} \, \,  z=0,1 \},$$
and
$$ \mathbb{F}:= \{(\Pi, \mathbf{v}, \tau) \in C^{\infty} \cap \mathbb{S}^{c}: \, \,  \hbox{periodic in}\, x\, ; \, \Pi_z= v^{z}=v^{x}_z=\tau =0\, \, \, \hbox{in} \, \, \, z=0,1 \}.$$
Notice that, as a consequence of the divergence free condition, one has $ \mathbf{v}= v(z) \mathbf{e}_1$ in $ \mathbb{S}$, with $\Bf e_1$ unit vector in the $x$-direction, whereas, in view of  the $x$-periodicity,  in $ \mathbb{F}$ one necessarily has mean value zero for all fields: $ \langle \Pi \rangle=\langle \mathbf{v} \rangle= \langle \tau \rangle=0$, where $\langle\cdot\rangle$ indicates average over $\Omega_0$.

Actually, for $(m,n) \in \mathbb{N}_{0} \times \mathbb{N}_{0}$, the relevant fields are found  as members of the  closure in $L^2$ of a complete set, $\mathcal B$,  as follows from \cite{CP}. For the  pressure field $ \Pi$, one chooses $\mathcal B$ as
\begin{equation}
\label{I1}
\phi_{mn}^{i}(x,z)= \begin{cases}
\cos( 2 \pi mx) \cos( \pi nz) \qquad \hbox{if} \quad i=1,\\
\sin( 2 \pi mx) \cos( \pi nz) \qquad \hbox{if} \quad i=-1,
\end{cases}
\end{equation}
while for the  temperature field $ \tau$ and the \emph{stream-function} $\Phi$ associated to $ \mathbf{v}$ by
$$\Phi_{x}=v^{z} \qquad \Phi_{z}=-v^{x}\, ,$$
$\mathcal B$ is given by
\begin{equation}
\label{I2}
\xi_{mn}^{i}(x,z)=\begin{cases}
\cos( 2 \pi mx) \sin( \pi nz) \qquad \hbox{if}  \quad i=1,\\
\sin( 2 \pi mx) \sin( \pi nz) \qquad \hbox{if}  \quad i=-1.
\end{cases}
\end{equation}
About the subscripts: the pair $(0,n)$ refers to $\mathbb{S}$, while the pair $(m,n) \in \mathbb{N} \times \mathbb{N}_{0},$ refers to $ \mathbb{F}.$
Notice that $\Pi$ can always be chosen with zero mean value, since it is defined up to a constant, while, of course, this is not true for the whole solution $ (\Pi, \mathbf{v}, \tau)$.
In particular, it is well known that $\Pi$ is obtained by solving  a Neumann problem associated to the equation
$$ \Delta \Pi= -\frac{1}{\Pr}\nabla \cdot ( \mathbf{v} \cdot \nabla \mathbf{v})+ \Ra \tau_{z},$$
which is indeed uniquely solvable since the right hand side vanishes because of the boundary conditions imposed on $\Bf v$ and $\tau$.

Let us now consider the reduced bases:
$$ \widetilde{\mathcal{B}}_{N}= \{ \phi_{nm}^{i}: (m,n) \in \mathbb{N} \times \mathbb{N}_{0} \quad \hbox{for} \quad i=1,2 \}.$$
$$ \widetilde{\mathcal{B}}_{D}= \{ \xi_{nm}^{i}: (m,n) \in \mathbb{N} \times \mathbb{N}_{0}\, \, \,  \hbox{or}\, \, \,  n=2k \, \, \,  \hbox{for} \, \, \,  i=1,2 \}.$$
The spaces of functions with vanishing mean value are then defined as follows. $ \widetilde{W}^{k,2}_{N}( \Omega_{0})$, $k=0,1,2$ is the closure of finite combinations of elements of $\widetilde{\mathcal{B}}_{N}$ in the Sobolev space $W^{k,2}(\Omega_{0})$.
Also, we denote by\footnote{Here, $N$ stands for Neumann and $D$ for Dirichlet.} $ (\widetilde{W}_{D}^{k,2}(\Omega_{0}))^2$ the space generated by $\tilde{\mathcal B}_D$ endowed with the norm of $W^{k,2}(\Omega_{0})$, for $k=0,1,2$. Analogously, we denote by $\widetilde{\mathcal{H}}_{D}(\Omega_{0})$ the space of vectorial divergence free functions generated by \eqref{I1} through the \emph{stream-function} , and endowed with the $L^{2}(\Omega_{0})$-norm. The space $(\widetilde{\mathcal{W}}_{D}^{k,2}(\Omega_{0}))^2$, $k=1,2$, is accordingly defined by the $W^{k,2}( \Omega_{0})$-norm.

As previously recalled, the problem \eqref{L1}, \eqref{C41}--\eqref{C1} possesses a unique, global regular solution in the sense of Ladyzhenskaya \cite[Section 3.5]{Tem}.
In particular, if $(\mathbf{v}_0,\tau_0)\in (\mathcal{W}^{1,2}_{D}(\Omega_{0}))^2\times W^{1,2}_{D}(\Omega_{0})$ then the solution satisfies
$$ \Pi \in L^{2} (0,T; W^{1,2}_{N}( \Omega_{0})),$$
$$ \mathbf{v} \in [L^{\infty}( 0,T; \mathcal{W}^{1,2}_{D}(\Omega_{0}) ]^2 \cap [L^{2}(0,T; \mathcal{W}^{2,2}_{D}( \Omega_{0}) ]^2,$$
$$ \tau \in L^{\infty} (0,T; W^{1,2}_{D}(\Omega_{0}) \cap L^{2}(0,T); W^{2,2}_{D}( \Omega_{0}))$$
We shall now introduce a suitable decomposition of the solution that allows us to calculate the mean value of all the quantities without computing the whole solution.

The basic solution $(\Pi, \mathbf{v}, \tau)\equiv(0, c\,\mathbf{e}_1, 0)$, $c$ a constant, satisfies system \eqref{L1} and the stress free boundary conditions \eqref{C41}-\eqref{C4}. Since $c$ is an  arbitrary real number,  this class of solutions corresponds to the strict Galileian invariance of the equations (see \cite{CP}) and we can call it null solution.

Now, we shall show that the space
 $\mathbb{S}$ is  an invariant, stable subspace of the solution space defined above. We begin to observe that
elements of $\mathbb S$ are triple of the type $(\Pi(z,t), a(z,t)\mathbf{i}, \tau(z,t))$.
Moreover, even if the initial data are not given in $\mathbb S$, the projection of the full system onto $\mathbb{S}$ does not depend on the solution in its complement, whereas the solution in the complement depends on the solution in $\mathbb{S}$.
By looking in general for solutions to \eqref{L1} of the  form
\begin{equation}\label{classe}(\mathcal{G}(z,t), \mathcal{A}(z,t)\,\mathbf{e}_1, \mathcal{T}(z,t))\,,
\end{equation}
we see that they should satisfy the linear system
\begin{equation}\label{stable}
\begin{cases}
\frac{\partial^{2} \mathcal{G}}{\partial z^{2}}= \Ra \frac{\partial \mathcal{T}}{\partial z}\\
\frac{1}{\Pr} \frac{\partial \mathcal{A}}{\partial t}- \frac{d^{2} \mathcal{A}}{dz^{2}}=0  \\
\frac{\partial  \mathcal{T}}{\partial t}- \frac{\partial^{2} \mathcal{T}}{\partial z^{2}}=0.
\end{cases}
\end{equation}
We have the following result of simple proof.
\vspace{0.5cm}

\begin{proposition}
	For all $Pr$ and $Ra$, problem \eqref{L1} with initial conditions
	$$ \begin{cases}
	v^{x}(x,z,0)=f(z) \\
	v^{z}(x,z,0)=0\\
	\tau(x,z,0)=g(z).
	\end{cases} $$
	with boundary conditions \eqref{C2}-\eqref{C3} and $f,g \in L^1(0,1)$ has a  solution $(\Bf v,\tau)$ in the space $C^{\infty}([\eta,\infty)\times \Omega_{0})$, for all $\eta>0$, which is also unique in the class \eqref{classe}.
\end{proposition}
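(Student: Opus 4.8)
The plan is to exploit the decoupled structure of the reduced system \eqref{stable}, obtained by inserting the ansatz \eqref{classe} into \eqref{L1}. Within the class \eqref{classe} the incompressibility constraint forces $v^z\equiv0$, so \eqref{C2} holds automatically; condition \eqref{C3} becomes the Neumann condition $\mathcal A_z(0,t)=\mathcal A_z(1,t)=0$, the temperature condition \eqref{C41} becomes the Dirichlet condition $\mathcal T(0,t)=\mathcal T(1,t)=0$, and the pressure condition \eqref{C4} becomes $\mathcal G_z(0,t)=\mathcal G_z(1,t)=0$. The crucial observation is that the second and third equations in \eqref{stable} are \emph{uncoupled} one-dimensional heat equations for $\mathcal A$ and $\mathcal T$, while the first equation determines $\mathcal G$ from $\mathcal T$ by a single quadrature in $z$. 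Thus the whole problem reduces to solving two scalar heat equations on $(0,1)$ and then recovering the pressure.

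First I would solve $\mathcal T_t=\mathcal T_{zz}$ with Dirichlet data and initial datum $\mathcal T(z,0)=g(z)$. Expanding in the sine basis gives
\begin{equation}
\label{heatT}
\mathcal T(z,t)=\sum_{n\ge1} b_n\,e^{-(n\pi)^2 t}\sin(n\pi z),\qquad b_n=2\int_0^1 g(s)\sin(n\pi s)\,ds.
\end{equation}
Since $g\in L^1(0,1)$ one has only $|b_n|\le 2\|g\|_1$, so the coefficients are merely bounded; however, for every $t\ge\eta>0$ the factor $e^{-(n\pi)^2 t}$ decays faster than any power of $n$, whence the series together with all its termwise $z$- and $t$-derivatives converges absolutely and uniformly on $[\eta,\infty)\times(0,1)$. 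This is precisely the instantaneous smoothing of the heat semigroup, and it yields $\mathcal T\in C^\infty([\eta,\infty)\times\Omega_0)$. The same argument applied to $\mathcal A_t=\Pr\,\mathcal A_{zz}$ with the Neumann data from \eqref{C3}, now in the cosine basis, produces a smooth $\mathcal A$; its zero mode survives as the Galileian constant noted after \eqref{stable}.

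Next I would recover $\mathcal G$. Integrating the first equation of \eqref{stable} once in $z$ gives $\mathcal G_z=\Ra\,\mathcal T+c_1(t)$; evaluating at $z=0$ and using $\mathcal T(0,t)=0$ together with $\mathcal G_z(0,t)=0$ forces $c_1(t)\equiv0$, and the condition at $z=1$ is then automatically consistent since $\mathcal T(1,t)=0$. A second quadrature, with the remaining constant fixed by the normalization $\langle\Pi\rangle=0$, determines $\mathcal G$ uniquely, and it inherits the $C^\infty$ regularity of $\mathcal T$. This establishes existence of a solution in the class \eqref{classe} lying in $C^\infty([\eta,\infty)\times\Omega_0)$ for every $\eta>0$.

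Finally, uniqueness within the class \eqref{classe} follows from linearity: the difference of two such solutions solves \eqref{stable} with zero initial data and the same homogeneous boundary conditions. Standard energy estimates (multiplying the two heat equations by $\mathcal T$ and $\mathcal A$ and integrating over $(0,1)$) give $\mathcal T\equiv\mathcal A\equiv0$, after which the quadrature step forces $\mathcal G\equiv0$. The only point requiring genuine care is the passage from $L^1$ data to a $C^\infty$ solution, but this is controlled entirely by the super-polynomial decay of the modes in \eqref{heatT}; everything else is routine, consistent with the claim that the proof is simple.
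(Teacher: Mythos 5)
Your proposal is correct and takes essentially the same route as the paper: the paper's proof consists precisely of writing the explicit sine/cosine Fourier series for $\mathcal T$ and $\mathcal A$ solving the decoupled heat equations \eqref{stable}$_{2,3}$ and recovering $\mathcal G$ from \eqref{stable}$_1$ by quadrature, exactly as you do. You merely make explicit some details the paper leaves implicit --- the super-polynomial decay argument giving $C^\infty$ smoothing from $L^1$ data on $[\eta,\infty)$, the energy-based uniqueness in the class \eqref{classe}, and the surviving zero cosine mode of $\mathcal A$ (the Galileian constant, which the paper's displayed series for $\mathcal A$ tacitly drops, and whose exponent should in fact carry the factor $\Pr$, i.e.\ $e^{-\Pr n^2\pi^2 t}$, since the equation is $\mathcal A_t=\Pr\,\mathcal A_{zz}$).
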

\begin{proof}
We look for solutions of the form $\Bf v= \mathcal{A}(z,t) \mathbf{e}_1, \tau=\mathcal{T}(z,t))$.  Then, as already noticed, $\mathcal A$ and $\mathcal T$ must satisfy \eqref{stable}$_{2,3}$ with a corresponding pressure $\mathcal G$ given in \eqref{stable}$_1$.
We then find
$$\begin{array}{ll}\medskip \displaystyle{\mathcal{A}= 2 \sum_{n=1}^\infty  \biggl( \int_{0}^1 f(s)\cos(n \pi s) \, ds \biggl) \cos(n \pi z) e^{-n^{2}  \pi^{2} t}
}\\ \displaystyle{
\mathcal{T}=2 \sum_{n=1}^\infty  \biggl( \int_{0}^1 g(s)\sin (n \pi s) \, ds \biggl) \sin(n \pi z) e^{-n^{2}  \pi^{2} t},}\end{array}
$$
which gives \cite{CP}
$$
\mathcal{G} =\mathcal{G}_0+2\Ra \sum_{n=1}^{\infty} \frac{ e^{-n^{2}  \pi^{2} t}}{n\pi} \big(\int_{0}^1 g(s)\sin (n \pi s) \, ds \big)(1-\cos n\pi z)\, .
$$
\end{proof}
In order to see how the projection on the space $\mathbb S$ may affect a generic flow, we need a further estimate for $\mathcal{A}\in L^{\infty}((0,T); W^{1,2}_{N}(\Omega_{0}))$. So, we choose $f(z), f'(z), f''(z) \in L^{2}( \Omega_{0})$ and differentiate the above solution with respect to $z$, we get
\begin{eqnarray*}
\sup_{t \in [0,T]} \| \mathcal{A}_{z} \|_{2} &\leq& \sqrt{\sum_{n=0}^{\infty} n^{2} \biggl( \int_{0}^{1} f(s) \sqrt{2} \cos(n \pi s) \, ds \biggl)^{2} e^{-2 n^{2} \pi^{2} t}} \\
&\leq& \sqrt{\sum_{n=0}^{\infty} n^{2} \biggl( \int_{0}^{1} f(s) \sqrt{2} \cos(n \pi s) \, ds \biggl)^{2}},
\end{eqnarray*}
that is,
\begin{equation}
\label{costanti}
| \mathcal{A}|_{\infty, 1,2} \leq \| f' \|_{2}.
\end{equation}
In a similar way
\begin{equation}
\label{costanti1}
| \mathcal{A}|_{\infty, 2,2} \leq \| f'' \|_{2}.
\end{equation}

If the projection of the initial conditions in $\mathbb{F}$ is different from zero, the generic solution $(\Bf v,\tau)$ to \eqref{L1} can be split as $ \mathbf{v}= \mathbf{u}+ \mathcal{A}\mathbf{e}_1$, $\tau=\sigma+\mathcal T$, with $\Bf u,\tau \in \mathbb F$. Since, by a straightforward calculation,
$$\nabla \cdot (\mathbf{v} \cdot \nabla \mathbf{v})= \nabla \cdot [\mathbf{u} \cdot \nabla \mathbf{u}+\mathcal{A} \mathbf{u}_{x}]+ \mathcal{A}_{z}u_{x}^{z}\,,$$
we deduce that the triple $(\Bf u,\sigma, \mathcal P=\Pi-\mathcal G)$ must satisfy the following set of equations with $\Bf A=\mathcal A\,\Bf e_1$
\begin{equation}
\label{L1bis}
\begin{cases}
\Delta \mathcal{P}= - \frac{1}{\Pr}[ \nabla \cdot ( \mathbf{u} \cdot \nabla \mathbf{u})+ \nabla \cdot ( \mathcal{\mathbf{A}} \cdot  \nabla \mathbf{u}+ \mathbf{u}  \cdot \nabla \mathcal{\mathbf{A}}  )]+ \Ra \sigma_{z},\\
\frac{1}{\Pr} \biggl( \frac{\partial \mathbf{u}}{\partial t}+ \mathbf{u} \cdot \nabla \mathbf{u}+ \mathcal{\mathbf{A}} \cdot  \nabla \mathbf{u}+ \mathbf{u}  \cdot \nabla \mathcal{\mathbf{A}} \biggl)- \Delta \mathbf{u}= - \nabla \Sigma +\Ra  \sigma \mathbf{e}_3,\\
\frac{\partial \sigma}{\partial t}+\mathbf{u}\cdot \nabla \sigma+\mathbf{A} \cdot  \nabla \sigma- \Delta \sigma= u^{z}.
\end{cases}
\end{equation}
where  $$\Sigma=\mathcal{P}+ \mathcal{G}- \Ra \int_{0}^{z} \mathcal{T}(\mathit{z},t) \, d \mathit{z}.$$
The influence of a non-zero  component in $\mathbb S$ on the generic flow  can be seen by performing energy estimates on the ``perturbed" system \eqref{L1bis}.\footnote{For generalized O-B Bénard system energy estimate see \cite{DP}.} Thus, if we multiply both sides of \eqref{L1bis}$_{2,3}$ by $\Bf u$ and $\sigma$, respectively and integrate by parts, we get
$$
\mbox{$\frac12$}\frac{dE}{dt}+\|\nabla\Bf u\|_2^2+\|\nabla\sigma\|_2^2=\Ra\langle \sigma,u^z\rangle-\frac1\Pr\langle \Bf u\cdot\nabla\Bf A,\Bf u\rangle\,,
$$
where $E(t):=  \frac1\Pr{\| \uu \|_{2}^{2}}+ \Ra \| \sigma \|_{2}^{2}$. Observing that
$$ |\langle\uu \cdot \nabla \textbf{A}, \uu\rangle|\, \leq | \mathcal{A}|_{\infty,1,2} \| \uu \|_{2}^{2} \leq \| f' \|_{2} \| \uu \|_{2}^{2}.
$$
we deduce
$$ \mbox{$\frac12$}\frac{dE}{dt}+ \| \nabla \uu \|_{2}^{2}+ \Ra \| \nabla \sigma \|_{2}^{2} \leq C(\Ra, \Pr,  \| f' \|_{2}) E(t)\, ,$$
from which the boundedness of $E(t)$ follows.
In the same fashion,  we can prove analogous estimates for ``higher order" energies.
Nevertheless, there is a relevant feature of the decomposition that we would like to point out. In case we were interested only in the mean value of the physical quantities we are able to do it without computing the solution of \eqref{L1bis}.
In fact, since the elements of $\mathbb{F}$ have zero mean value, it follows immediately that all the averaged quantities can be found just by computing the part of the solution that lies in $ \mathbb{S}$. To this end it suffices to use just the solution obtained by prescribing as initial condition the mean value with respect to $x$ of the full initial condition. Precisely, we have the following result of immediate proof.
\begin{proposition} For all $Pr$ and $Ra$, the mean values of the solution of the full problem \eqref{L1} are equal to the averaged solutions of \eqref{stable} with initial conditions
	$$ \begin{cases}
	v_0^{x}=\int_0^1 v^{x}(x,z,0) dx \\
	v_0^{z}=0\\
	\tau_0=\int_0^1 \tau(x,z,0) dx.
	\end{cases} $$
\end{proposition}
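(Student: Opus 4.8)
The plan is to read the statement entirely through the $\mathbb{S}\oplus\mathbb{F}$ splitting set up above, so that the assertion collapses to the vanishing–mean property of $\mathbb{F}$ together with the invariance of $\mathbb{S}$. First I would take the unique regular solution $(\mathbf{v},\tau)$ recalled above and write it as $\mathbf{v}=\mathcal{A}\,\mathbf{e}_1+\mathbf{u}$, $\tau=\mathcal{T}+\sigma$ with $(\mathbf{u},\sigma)\in\mathbb{F}$; since the basis generating $\mathbb{F}$ consists of modes $\cos(2\pi m x)$, $\sin(2\pi m x)$ with $m\ge 1$, every field in $\mathbb{F}$ has zero $x$–average for each $z$, so this is exactly the decomposition of $(\mathbf{v},\tau)$ into its $\mathbb{S}$–projection and a mean–free remainder. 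Consequently $\langle\mathbf{v}\rangle=\langle\mathcal{A}\,\mathbf{e}_1\rangle$ and $\langle\tau\rangle=\langle\mathcal{T}\rangle$, and the means of the full solution are carried entirely by the $\mathbb{S}$–component $(\mathcal{A},\mathcal{T})$.

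Next I would pin down the initial datum of that component. Because $\mathbf{u},\sigma$ have vanishing $x$–average for every $z$, projecting the initial condition \eqref{C1} onto $\mathbb{S}$ is just averaging in $x$, which returns $\mathcal{A}(z,0)=\int_0^1 v^x(x,z,0)\,dx$, vertical part $\equiv 0$, and $\mathcal{T}(z,0)=\int_0^1\tau(x,z,0)\,dx$ — precisely the data in the statement. Invoking the invariance of $\mathbb{S}$ established in the paragraph preceding \eqref{L1bis}, the pair $(\mathcal{A},\mathcal{T})$ then solves the linear system \eqref{stable}, and by the preceding Proposition I obtain it explicitly and uniquely. Averaging that solution and comparing with the previous step yields the claimed identity of means.

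The only substantive point — and the step I expect to be the real obstacle — is the imported claim that the $\mathbb{S}$–projection of the \emph{nonlinear} system still obeys the \emph{linear} system \eqref{stable}, i.e. that the convective terms contribute nothing to the mean. The natural way to check this is to put the quadratic terms in divergence form, $\mathbf{v}\cdot\nabla\mathbf{v}=\nabla\cdot(\mathbf{v}\otimes\mathbf{v})$ and $\mathbf{v}\cdot\nabla\tau=\nabla\cdot(\tau\,\mathbf{v})$, integrate over $\Omega_0$, and discard the boundary integrals: the faces $x=0,1$ cancel by periodicity, while on $z=0,1$ every flux carries the factor $v^z=u^z=0$. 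For the horizontal momentum this closes the argument, since the stress–free condition $v^x_z=0$ also kills the diffusive boundary term and leaves $\langle v^x\rangle$ constant and $\langle v^z\rangle=0$, matching the $z$–average of the conduction profile $\mathcal{A}$. For the temperature the same reduction shows that convection enters the balance for $\langle\tau\rangle$ only through a boundary flux annihilated by $u^z=0$, but here the delicate part — on which I would concentrate — is that the surviving diffusive flux $\int_0^1[\tau_z]_0^1\,dx$ is read off the true $x$–average $\int_0^1\tau\,dx$, whose interior profile is modified by the convective correlation $\int_0^1\sigma\,u^z\,dx$; the argument must therefore be organised so that this correlation influences the averaged temperature only through the vanishing boundary trace, leaving the evolution of $\langle\tau\rangle$ governed by the linear operator of \eqref{stable} and independent of the fluctuation $\sigma$.
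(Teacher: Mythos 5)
Your first two steps coincide with the paper's (essentially unwritten) argument: the paper calls the result ``of immediate proof,'' resting entirely on the zero-mean property of $\mathbb F$ together with the earlier assertion that ``the projection of the full system onto $\mathbb{S}$ does not depend on the solution in its complement.'' Your decomposition $\mathbf v=\mathcal A\,\mathbf e_1+\mathbf u$, $\tau=\mathcal T+\sigma$ and the identification of the averaged initial data are exactly that. The difference is that you, rightly, refuse to take the autonomy of the $\mathbb S$-projection for granted and try to verify it --- and the point you flag at the end is a genuine gap, in your write-up and in fact in the paper itself. Projecting onto $\mathbb S$ means taking the $x$-average $\overline f(z,t)=\int_0^1 f\,dx$, and the averaged equations are \emph{not} \eqref{stable}: the averaged temperature equation reads $\overline\tau_t-\overline\tau_{zz}=-\partial_z\overline{v^z\tau}$ and the averaged horizontal momentum equation carries the Reynolds term $-\frac1\Pr\,\partial_z\overline{u^xu^z}$. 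Neither correlation vanishes for a generic superposition of $\mathbb F$-modes (they do for a single roll, by phase orthogonality in $x$, but not for combinations of modes with different vertical wavenumbers and phases). Hence $\mathbb F$ is not invariant under \eqref{L1bis}, $\overline\tau\neq\mathcal T$ for $t>0$, and the imported claim that the $\mathbb S$-projection of the nonlinear system obeys the linear system \eqref{stable} is false pointwise in $z$; it is correct only when the data lie in $\mathbb S$ (the situation of the preceding Proposition), which is precisely not the case here.

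Your divergence-form computation does rescue the velocity half of the statement for the global mean: integrating over $\Omega_0$, periodicity in $x$ plus $v^z=0$ at $z=0,1$ (condition \eqref{C2}) annihilate the convective flux, and the stress-free condition \eqref{C3} annihilates the viscous flux, so $\langle v^x\rangle$ is constant in time, matching the Neumann evolution of $\mathcal A$, while $\langle v^z\rangle=0$ follows from incompressibility. But for the temperature the difficulty you isolate is fatal rather than organizational: $\frac{d}{dt}\langle\tau\rangle=[\,\overline\tau_z\,]_{z=0}^{z=1}$, and $\overline\sigma:=\overline\tau-\mathcal T$ solves the heat equation with homogeneous Dirichlet data and zero initial datum, forced by $-\partial_z\overline{v^z\tau}$; expanding $\overline\sigma=\sum_n b_n(t)\sin(n\pi z)$ gives $[\,\overline\sigma_z\,]_0^1=-2\pi\sum_{n\ \mathrm{odd}}n\,b_n(t)$, which is generically nonzero because the forcing excites odd modes unless $\overline{v^z\tau}$ happens to be even about $z=1/2$ for all time. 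So the correlation reaches $\langle\tau\rangle$ through the interior profile and the boundary heat flux, and there is no way to ``organise'' the argument so that it acts only through the vanishing boundary trace of $v^z$. In short: your proof is incomplete at exactly the step the paper leaves implicit, and your analysis in fact locates where the paper's assumed decoupling fails --- the temperature part of the proposition holds along this route only under extra hypotheses (e.g.\ a discrete symmetry of the data keeping $\overline{v^z\tau}$ symmetric about mid-height), not ``for all $\Pr$ and $\Ra$'' and arbitrary data as stated.
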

\section{A discrete symmetry problem}
We are interested in the 2D-convective motion of a Navier-Stokes liquid between two horizontal coaxial cylinders with radii $R_{o}$ and $R_{i}\,(< R_{o})$, when  temperature distributions $\Theta_{i}$, on the inner jacket, and $\Theta_{o}$, on the outer jacket, are prescribed, with $\Theta_{o} < \Theta_{i}$ see \cite{PR,PF,PRT,DFPP,PFRT}. In such a case, one observes steady-state  flow,  no matter how small $\delta \Theta=\Theta_i-\Theta_o$.
The precise mathematical formulation
of the problem goes as follows. Denote by $(r,\varphi)$ a set of polar coordinates in the cross-sectional plane of both cylinders,  with the origin on the common axis.
In the first place, one needs a suitable lifting of the temperature distribution at the boundary, say $\Theta^{*}$,  which corresponds to the state of pure conduction:
$$ \Delta \Theta^{*}=0, \qquad \Theta^{*}(R_{i})=\Theta_{i} \qquad \Theta^{*}(R_{o})=\Theta_{o}.$$
We thus get
$$ \Theta^{*}=\Theta_{i}+ \frac{\Theta_{o}-\Theta_{i}}{\mathcal{B}}( \log r-\log R_{i}),$$
where
\begin{equation}\label{Bi}
\mathcal{B}:= \log \bigl( \frac{R_{o}}{R_{i}} \bigl)= \log \biggl(1+ \frac{2}{\mathcal{D}} \biggl), \quad \hbox{where} \quad \mathcal{D}:= \frac{2R_{i}}{R_{o}-R_{i}}.
\end{equation}
The (dimensionless) parameter $\mathcal D$  is related to the the ``geometry" of the domain. A ``large" value of $\mathcal D$ implies a narrow gap, while a ``small" one would mean the opposite.
The region of flow, in dimensionless form, is then given by
$$ \Omega_{\mathcal{D}}:= \{(r, \varphi) \in \mathbb{R}^{2}: r \in  ( \mathcal{D}/2,1+ \mathcal{D}/2 ) \}.$$
Thus, after introducing the  "deviatory-temperature"
$$ \tau:= \Theta-\Theta^{*}=\Theta-\Theta_{i}+ \frac{\Theta_{i}-\Theta_{o}}{\mathcal{B}}( \log r - \log R_{i}),$$
and defining the non-dimensional quantities
$$ \tilde{t}= \frac{\kappa}{(R_{o}-R_{i})^{2}}t, \quad \tilde{r}:= \frac{r}{R_{o}-R_{i}}, \quad \tilde{z}:= \frac{z}{R_{o}-R_{i}}, \quad \tilde{\tau}:= \frac{\tau}{\Theta_{i}-\Theta_{o}}, \quad \tilde{R_i}:=\mbox{$\frac12$}\mathcal D,\quad \tilde{R_o}:=1+\mbox{$\frac12$}\mathcal D \,,
$$
with $z$ vertical coordinate,
one can show that
the relevant governing equations (in non-dimensional form) are given by (tildes omitted)\footnote{In polar coordinates $ \uu \cdot \nabla \uu$ has the following components:
	$$ \textbf{u} \cdot \nabla \textbf{u}=
	\begin{pmatrix}
	\partial_{r} v^{r} \quad \frac{1}{r} (\partial_{\varphi}v^{r}-v^{\varphi})\\
	\partial_{r} v^{\varphi} \quad \frac{1}{r} (\partial_{\varphi}v^{\varphi}+v^{r})
	\end{pmatrix}
	\begin{pmatrix}
	v^{r}\\
	v^{\varphi}
	\end{pmatrix}
	=
	\begin{pmatrix}
	v^{r} \partial_{r} v^{r} + \frac{v^{\varphi}}{r}( \partial_{\varphi} v^{r}-v^{\varphi})\\
	v^{r}\partial_{r} v^{\varphi}+ \frac{v^{\varphi}}{r}(v^{r}+ \partial_{\varphi} v^{\varphi})
	\end{pmatrix}
	$$}

\begin{equation}
\label{T1}
\begin{cases}
\nabla \cdot \textbf{v}=0\\
\frac{1}{\Pr} \biggl( \frac{\partial \textbf{v}}{\partial t}+ \textbf{v} \cdot \nabla \textbf{v} \biggl)=- \nabla \Pi +\Delta \textbf{v}+ \frac{\Ra}{\mathcal{B}} \sin \varphi \textbf{e}_{r}+\Ra \tau \textbf{e}_{3}\\
\frac{\partial \tau}{\partial t}+ \textbf{v} \cdot \nabla \tau- \frac{v_{r}}{r \mathcal{B}}= \Delta \tau\,.
\end{cases}
\end{equation}
Here $\textbf{v}$ and $\Pi$ are velocity and (modified) pressure fields of the liquid, $v_{r}$ the radial component of $\textbf{v}$, while  Prandtl and Rayleigh numbers are given by
\begin{equation}
\Pr:= \frac{\nu}{ D}, \qquad \Ra:= \frac{\alpha g (\Theta_{i}-\Theta_{o})(R_{o}-R_{i})^{3}}{\nu D}\, ,
\end{equation}
where $\nu$ and $D$ are kinematic viscosity and thermal diffusivity coefficients, respectively,  $\alpha$ is the coefficient of volume expansion and $g$ the acceleration of gravity. Finally, $\textbf{e}_r$ and $\textbf{e}_3$ are unit vectors in the radial and upright vertical direction.
To (\ref{T1}) we append the boundary conditions
\begin{equation}
\label{T2}
\textbf{v}(t, R_{i}, \varphi)= \textbf{v}(t,R_{o}, \varphi)=\textbf{0} \qquad
\tau(t,R_{i}, \varphi)= \tau(t, R_{o}, \varphi)=0.
\end{equation}
Furthermore, all fields are of course periodic in $\varphi$.

It is important to emphasize that, since $\nabla\times (\sin\varphi\,\textbf{e}_r)\neq\textbf{0}$, unlike the classical B\'enard convection problem between two horizontal planes, in the problem at hand  the state of pure conduction, namely, the trivial solution $(\textbf{v}\equiv\textbf{0}, \tau\equiv0$), is {\em not} allowed.
Now, given the geometry of the flow region, one expects that such non-trivial flow possesses mirror symmetry property with respect to the vertical diameter of the annulus. Precisely, denoting by $(v_r,v_\varphi)$ the polar components of the velocity field, this  property translates in
the following \emph{even symmetry} condition  (see also \cite[Figure 2]{PFRT}):
\begin{equation}\nonumber
\label{1}
v^{r}(r, \varphi)=v^{r}(r, \pi- \varphi)\, ,
\end{equation}
\begin{equation}
\label{2}
v^{\varphi}(r, \varphi)=- v^{\varphi}(r, \pi- \varphi)\, ,
\end{equation}
\begin{equation}\nonumber
\tau(r, \varphi)=\tau(r, \pi- \varphi)\, .
\end{equation}
As shown in \cite{PFRT}, the set of equations in \eqref{T1} is invariant under the symmetry \eqref{2}.  However, though physically intuitive, the proof of existence of steady-state  solutions of the type (\ref{1}) for {\em arbitrary} values of the parameters is not so obvious. In principle, one could simply notice that once a solution $(\mathbf{v},\tau)$ is found, then after the change $\varphi \to \pi-\varphi$ another solution is automatically given by the right-hand sides in \eqref{2}. Thus, the found steady-state solution  is necessarily symmetric, provided one proves its uniqueness.
However, uniqueness holds only under certain restrictions on the size of the parameters and, in fact,  it is not even expected, as can be deduced by numerical simulations (see for instance \cite{Y}) showing multiple steady solutions in some regions of the dimensionless parameter space.
Therefore, existence of symmetric steady solutions without restrictions on the magnitude of the parameters should be possibly proved directly in the class of functions satisfying \eqref{2}. This was done in \cite{PFRT} by employing the classical Galerkin   method with a suitable base of functions in the class \eqref{2}. Precisely, setting
$$
\mathcal W^{1,2}(\Omega):=\{\textbf{v}\in [W_0^{1,2}(\Omega)]^2: \ \nabla\cdot\textbf{v}=0\}\,,
$$
with $W^{1,2}_0$ Sobolev space of functions vanishing at the boundary, in \cite{PFRT} the following existence result is proved.
\begin{theorem}\label{teo}
For any $\Pr$, $\Ra$ and $\mathcal{D}$, there exists at least one even symmetric steady-state solution $(\textbf{v}_0,\tau_0)\in \mathcal W^{1,2}(\Omega)\times W^{1,2}_0(\Omega)$ of system \eqref{T1} verifying the following estimates
\begin{equation} \label{bi}
\| \nabla \mathbf{v}_{0} \|_{2} \leq \Ra \frac{c}{\mathcal{B}} \quad \hbox{and} \quad \| \nabla \tau_{0} \|_{2} \leq \Ra \frac{c}{\sqrt[3]{\mathcal{B}^{2}}},
\end{equation}
where $c \leq c( \mathcal{D}^{*})$ for all $ \mathcal{D}< \mathcal{D}^{*}.$ Moreover, there is ${\rm Ra}_*>0$ such that this solution is unique for ${\rm Ra}<{\rm Ra}_*$.
\end{theorem}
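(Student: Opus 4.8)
The plan is to construct the solution by the Galerkin method carried out entirely inside the symmetric class \eqref{2}, so that the required symmetry is built into the approximation from the outset. First I would fix, as in \cite{PFRT}, a basis of the solenoidal space $\mathcal W^{1,2}(\Omega)$ whose elements already satisfy the parity relations \eqref{2} --- most naturally realized through a stream function expanded in the trigonometric system compatible with the mirror symmetry $\varphi\mapsto\pi-\varphi$ --- together with a matching basis of $W^{1,2}_0(\Omega)$ for the temperature. Since the differential operators in \eqref{T1} and the driving terms $\frac{\Ra}{\mathcal B}\sin\varphi\,\textbf e_r+\Ra\,\tau\,\textbf e_3$ are invariant under \eqref{2} (as recalled just before the statement), the finite--dimensional projected system maps the symmetric subspace into itself, so every solution of it is automatically even symmetric.

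Next I would solve the projected algebraic system by a Brouwer fixed--point / topological degree argument; the essential ingredient is an a priori bound on the approximations $(\textbf v_N,\tau_N)$ uniform in $N$. One tests the momentum equation with $\textbf v_N$ and the heat equation with $\tau_N$. Two standard cancellations occur: the pressure gradient drops because $\textbf v_N$ is solenoidal and vanishes on $\partial\Omega$, and each convective term $\langle\textbf v_N\cdot\nabla\textbf v_N,\textbf v_N\rangle$, $\langle\textbf v_N\cdot\nabla\tau_N,\tau_N\rangle$ vanishes by skew--symmetry (so $\Pr$ disappears, consistent with the statement holding for all $\Pr$). One is left with
\begin{equation}\nonumber
\|\nabla\textbf v_N\|_2^2=\frac{\Ra}{\mathcal B}\langle\sin\varphi,v_N^r\rangle+\Ra\langle\tau_N,v_N^z\rangle,\qquad \|\nabla\tau_N\|_2^2=\frac{1}{\mathcal B}\Big\langle\frac{v_N^r}{r},\tau_N\Big\rangle .
\end{equation}
The thermal identity, combined with the geometric bound $1/r\le 2/\mathcal D$ on $\Omega_{\mathcal D}$ and the Poincaré inequality, gives $\|\nabla\tau_N\|_2\le c\,\mathcal B^{-1}\|\nabla\textbf v_N\|_2$; inserting this into the momentum identity and controlling the buoyancy coupling $\Ra\langle\tau_N,v_N^z\rangle$ by Hölder and the two--dimensional Sobolev embeddings then yields, after some work, the bounds \eqref{bi}, with $c$ kept uniform through the restriction $\mathcal D<\mathcal D^{*}$.

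The hard part is precisely this last estimate. Unlike the planar B\'enard problem, where the conduction profile is vertical and the buoyancy coupling cancels the thermal source when one tests with $\textbf v$ and $\Ra\,\tau$, here the basic gradient $\nabla\Theta^{*}$ is radial while gravity is vertical, so the cross terms $\Ra\langle\tau_N,v_N^z\rangle$ and $\frac1{\mathcal B}\langle v_N^r/r,\tau_N\rangle$ involve different velocity components and do not cancel. The coupling must therefore be genuinely estimated rather than eliminated, and closing the resulting inequality unconditionally in $\Ra$ is exactly what forces the precise powers of $\mathcal B$ and the use of the smallness carried by the factors $1/\mathcal B$ in the driving terms together with $\mathcal D<\mathcal D^{*}$; this is where I expect the main obstacle to lie. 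Once the uniform bound is secured, I would pass to the limit $N\to\infty$ using the compact embedding $W^{1,2}(\Omega)\hookrightarrow\hookrightarrow L^2(\Omega)$ to treat the nonlinear terms, obtain a weak solution $(\textbf v_0,\tau_0)$ in the symmetric class (which is closed under weak limits), and recover $\Pi$ from the associated Neumann problem as in Section~2; the estimates \eqref{bi} survive by weak lower semicontinuity.

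Finally, for uniqueness at small $\Ra$ I would take two symmetric steady solutions, subtract the systems, and test the difference with itself. The convective and pressure contributions vanish as above, while the residual quadratic and coupling terms are bounded, via \eqref{bi}, by a constant multiple of $\Ra$ times the Dirichlet energy of the difference. Choosing $\Ra<\Ra_{*}$ makes that constant strictly smaller than the coercivity constant, forcing the difference to vanish. This last step is routine; the real work remains the uniform a priori bound described above.
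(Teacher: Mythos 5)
Two remarks before the substance: the paper itself does not prove Theorem~\ref{teo} --- it imports it from \cite{PFRT}, recording only that the proof there proceeds by the Galerkin method with a basis built inside the symmetric class \eqref{2}. Your overall architecture (symmetric Galerkin basis, invariance of the projected system under the mirror symmetry, Brouwer fixed point, compactness passage to the limit, small-$\Ra$ uniqueness by energy contraction) therefore coincides with the cited route. However, as a proof of the statement as written, your sketch has a genuine gap, and it sits exactly where you yourself locate ``the main obstacle''. Your energy scheme cannot close for \emph{arbitrary} $\Ra$: the thermal identity gives $\|\nabla\tau_N\|_2\le \frac{c}{\mathcal B}\|\nabla\mathbf v_N\|_2$, and feeding this into the buoyancy coupling yields $\Ra\,\langle\tau_N,v_N^z\rangle\le \frac{c\,\Ra}{\mathcal B}\|\nabla\mathbf v_N\|_2^2$, a term of precisely the same (quadratic) order as the dissipation on the left-hand side; the resulting inequality closes only under a smallness restriction of the type $\Ra\le c\,\mathcal B$, whereas the theorem asserts existence with the bounds \eqref{bi} for \emph{every} $\Ra$. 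H\"older and two-dimensional embeddings cannot repair this, since both sides scale the same way in $(\mathbf v_N,\tau_N)$. The missing ingredient is an estimate on the temperature that is \emph{independent of the velocity} --- typically a maximum-principle bound on the full temperature $\Theta$, giving $\|\tau\|_\infty\le 1$ in nondimensional variables (obtainable at the approximating level by truncation/Stampacchia-type arguments) --- which decouples the buoyancy term from $\|\nabla\mathbf v_N\|_2$ and is what makes unconditional a priori bounds possible. A symptom that your chain is not the one actually used: if it did close, it would produce $\|\nabla\tau_0\|_2\lesssim \Ra\,\mathcal B^{-2}$, not the exponent $\mathcal B^{-2/3}$ appearing in \eqref{bi}.

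A second, smaller but real, defect: your geometric bound $1/r\le 2/\mathcal D$ on $\Omega_{\mathcal D}$ destroys the claimed uniformity $c\le c(\mathcal D^*)$ for all $\mathcal D<\mathcal D^*$, because it blows up precisely in the admissible limit $\mathcal D\to 0$. The term $\|v_N^r/r\|_2$ should instead be controlled by a Hardy-type inequality exploiting the fact that $\mathbf v_N$ vanishes on the inner boundary $r=R_i=\mathcal D/2$: since $1/r\le 1/(r-R_i)$ on $\Omega_{\mathcal D}$, one gets
\begin{equation}\nonumber
\Bigl\|\frac{v_N^r}{r}\Bigr\|_2\;\le\;\Bigl\|\frac{v_N^r}{r-R_i}\Bigr\|_2\;\le\;2\,\|\nabla\mathbf v_N\|_2\,,
\end{equation}
with a constant independent of $\mathcal D$, consistent with the paper's later remark that the relevant embedding constants can be taken independent of $\mathcal D$. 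The symmetric-subspace argument, the limit passage, and the uniqueness step are fine as sketched; what must be added to turn the proposal into a proof of Theorem~\ref{teo} are these two estimates, above all the velocity-independent temperature bound.
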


Now, as mentioned earlier on, it is expected that for ``large" Ra the above solution is no longer unique and bifurcation (steady or time-periodic) may occur. This issue has been the object of both numerical and experimental tests, mostly, in the case when $\mathcal D\to 0$. However, their outcome seems to be at odds (see \cite{PFRT} and the reference therein). More precisely, numerical tests \cite{Y} suggest that ``exchange of stabilities" occurs, with the bifurcating branch being steady and with  {\em even symmetry}, whereas the experimental ones show  transition to a stable time-periodic solution.
The latter consists of an oscillatory motion of the streamlines around the axis of the cylinder, which implies {\em breaking of the even
 symmetry}. Our objective here is to furnish a contribution to this problem, by furnishing a distinctive lower bound at which thi transition may occur.

To this end, let $(\textbf u,{\sigma/\sqrt{{\rm Ra}}})$ be the generic perturbation to the steady-state motion $(\textbf v_0,\tau_0)$. From \eqref{T1} we thus get
\begin{equation}
\label{Tw}
\begin{cases}
\nabla \cdot \textbf{u}=0\\
\frac{1}{\Pr} \biggl( \frac{\partial \textbf{u}}{\partial t}+ \textbf{u} \cdot \nabla \textbf{u}+\textbf{u} \cdot \nabla \textbf{v}_0+\textbf{v}_0\cdot \nabla \textbf{u} \biggl)-\Delta\textbf{u}=- \nabla Q +\sqrt{{\rm Ra}}\, {\sigma} \textbf{e}_{3}\\
\frac{\partial {\sigma}}{\partial t}+ \textbf{u} \cdot \nabla  {\sigma}+\sqrt{{\rm Ra}}\,\textbf{u} \cdot \nabla \tau_0+\textbf{v}_0 \cdot \nabla {\sigma}-\Delta {\sigma}=\sqrt{{\rm Ra}}\,\displaystyle{\frac{u^r}{r \mathcal{B}}} \,.
\end{cases}
\end{equation}
Formally multiplying the second equation by $\textbf u$, the third by $\sigma$ and then integrating by parts over $\Omega$, we deduce
the ``perturbation energy equation''
\begin{equation}
\label{bif}
\frac{d}{dt}E+\frac{1}{\Pr}\langle\textbf{u}\cdot\nabla\textbf{v}_0,\textbf{u}\rangle+{\sqrt{\Ra}}\,\!\langle\textbf{u}\cdot\nabla\tau_0, \sigma\rangle+\|\nabla\textbf{u}\|^2+\|\nabla \sigma\|^2=\sqrt{\Ra}\langle u^z+\frac{u^r}{r\mathcal{B}},\sigma\rangle\,,
\end{equation}
where
$$
E=\frac1{2{\rm Pr}}\|\textbf u\|^2+\frac12\|\sigma\|^2\,,
$$
and, we recall, $\langle\cdot,\cdot\rangle$ and $\|\cdot\|_2$ denote scalar product and norm in $L^2(\Omega)$. Setting
\begin{equation}\begin{array}{cc}\medskip\medskip
\mathcal{F}(\mathbf{u},\sigma,\Pr,\Ra):=\displaystyle{\frac{\sqrt{\Ra}\langle u^z+\frac{u^r}{r\mathcal{B}},\sigma\rangle-\frac{1}{\Pr} \langle\textbf{u}\cdot\nabla\textbf{v}_0,\textbf{u}\rangle- {\sqrt{\Ra}}\,\!\langle\textbf{u}\cdot\nabla\tau_0, \sigma\rangle}{\|\nabla\mathbf{u}\|^2+\|\nabla \sigma\|^2}}\,,
\end{array}
\end{equation}
from \eqref{bif} we deduce
\begin{equation}\label{energy}
\frac{d}{dt}E=(\mathcal{F}(\mathbf{u},\sigma,\Pr,\Ra)-1)\left(\|\nabla\mathbf{u}\|^2+\|\nabla \sigma\|^2\right)\,.
\end{equation}
The following result holds.
\begin{proposition}
Let
\begin{equation}
M(\Pr,\Ra)=\displaystyle{\max_{(\mathbf{u},\sigma)\in\mathcal{W}^{1,2} \times W_0^{1,2}}\mathcal{F}(\mathbf{u},\sigma,\Pr,\Ra)}\,.
\label{max}
\end{equation}
Then $M$ exists for all values of $\Pr$ and $\Ra$.
\end{proposition}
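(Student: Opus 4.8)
The plan is to read $\mathcal F$ as a generalized Rayleigh quotient on a Hilbert space and then appeal to the spectral theory of compact self-adjoint operators. I would first endow the product space $H:=\mathcal W^{1,2}(\Omega)\times W_0^{1,2}(\Omega)$ with the inner product
$$
\big\langle(\mathbf u_1,\sigma_1),(\mathbf u_2,\sigma_2)\big\rangle_H:=\langle\nabla\mathbf u_1,\nabla\mathbf u_2\rangle+\langle\nabla\sigma_1,\nabla\sigma_2\rangle\,,
$$
which by the Poincar\'e inequality is equivalent to the $W^{1,2}$ norm, so that $H$ is a Hilbert space and the denominator of $\mathcal F$ is exactly $\|(\mathbf u,\sigma)\|_H^2$. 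Since numerator and denominator are both homogeneous of degree two, $\mathcal F$ is $0$-homogeneous and maximizing it amounts to maximizing the numerator $N(\mathbf u,\sigma)$ over the unit sphere of $H$.

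I would then write $N(\mathbf u,\sigma)=B\big((\mathbf u,\sigma),(\mathbf u,\sigma)\big)$ with $B$ the symmetric bilinear form obtained by polarization, and check that $B$ is bounded on $H$. The coupling term is dominated by $\|\mathbf u\|_2\|\sigma\|_2$, where one uses that $r$ stays bounded away from zero on the annulus $\Omega_{\mathcal D}$ so that $1/r$ is bounded; the two convective terms are controlled by H\"older's inequality and the two-dimensional Ladyzhenskaya inequality $\|w\|_4\le C\|w\|_2^{1/2}\|\nabla w\|_2^{1/2}$, together with $\nabla\mathbf v_0,\nabla\tau_0\in L^2(\Omega)$, which is guaranteed by Theorem~\ref{teo}. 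By the Riesz representation theorem $B$ is then represented by a bounded self-adjoint operator $T:H\to H$, so that $\mathcal F(x)=\langle Tx,x\rangle_H/\|x\|_H^2$.

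The decisive step is to prove that $T$ is compact, equivalently that $B$ is completely continuous: whenever $x_n=(\mathbf u_n,\sigma_n)\rightharpoonup x$ and $y_n\rightharpoonup y$ weakly in $H$, then $B(x_n,y_n)\to B(x,y)$. Here the Rellich--Kondrachov theorem is essential: in two dimensions the embeddings $\mathcal W^{1,2}(\Omega)\hookrightarrow L^4(\Omega)$ and $W_0^{1,2}(\Omega)\hookrightarrow L^4(\Omega)$ are compact, so weak convergence in $H$ upgrades to strong convergence in $L^2$ and $L^4$; this allows passage to the limit in each term, the products $u_n^iu_n^j$ and $u_n^i\sigma_n$ converging strongly in $L^2$ while being tested against the fixed $L^2$ data $\nabla\mathbf v_0$ and $\nabla\tau_0$ (the coupling term being even simpler, as $1/r$ is merely a bounded coefficient). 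Complete continuity then yields compactness of $T$ by the standard argument: one has $Tx_n\rightharpoonup Tx$, while $\|Tx_n\|_H^2=B(x_n,Tx_n)\to B(x,Tx)=\|Tx\|_H^2$, and weak convergence combined with convergence of norms forces strong convergence.

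Finally, since $T$ is compact and self-adjoint, the spectral theorem provides eigenvalues accumulating only at $0$, and the supremum of the Rayleigh quotient equals the largest eigenvalue and is attained. One subtlety deserves attention: $N$ vanishes identically on the subspace $\{\mathbf u=\mathbf 0\}$, so $\mathcal F\equiv 0$ there and hence $M\ge 0$; this rules out the pathological case of an injective $T$ with purely negative spectrum, for which the supremum $0$ would fail to be attained, and guarantees that the maximum is realized either at a positive eigenvector or, when $M=0$, at any $(\mathbf 0,\sigma)$. I expect the main obstacle to be precisely the verification of complete continuity of the convective terms, i.e. ensuring that the weak-to-strong upgrade furnished by the compact Sobolev embedding is strong enough to pass to the limit in these quadratic quantities; the lower bound on $r$ is what keeps the coupling term under control throughout.
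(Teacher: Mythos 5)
Your proposal is correct, but it reaches the conclusion by a genuinely different route than the paper. The paper runs the direct method of the calculus of variations: it bounds $\mathcal F$ above via the Poincar\'e and Ladyzhenskaya inequalities, takes a maximizing sequence normalized by $D(\mathbf w_k,\zeta_k)=\|\nabla\mathbf w_k\|_2^2+\|\nabla\zeta_k\|_2^2=1$, extracts a weak limit, upgrades to strong $L^q$ convergence by compact embedding so as to pass to the limit in the numerator $\mathcal I$ (this is \eqref{sept}), and then proves by a separate normalization argument that $D(\bar{\mathbf w},\bar\zeta)=1$ --- explicitly restricting that last step to the case $\ell>0$, ``the only one relevant from a physical viewpoint.'' You package the same analytic core --- the numerator is a quadratic form in $(\mathbf u,\sigma)$ containing no derivatives of the unknowns, so Rellich--Kondrachov renders it weakly (completely) continuous, while the denominator is exactly the squared gradient norm --- into compactness of the representing operator $T$, and then invoke the spectral theorem for compact self-adjoint operators. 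The estimates underlying the two proofs are identical (your boundedness of $B$ is the paper's bound on $\mathcal I$; your complete continuity of $B$ is the paper's limit passage), so neither argument is analytically cheaper; the difference is in what comes for free afterwards. Your route handles the degenerate case cleanly: since $\mathcal F$ vanishes on $\{\mathbf u=\mathbf 0\}$, the top of the spectrum is nonnegative and the maximum is attained even when it equals zero, precisely the case $\ell\le 0$ that the paper sets aside. It also delivers, at no extra cost, the spectral structure that the paper subsequently imports from \cite{GaSt}: the maximizer is the eigenvector of $T$ at the largest eigenvalue, whose Euler--Lagrange system is \eqref{eig}, including the symmetrized coefficient $\mathbf D(\mathbf v_0)$ that your polarization produces automatically. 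What the paper's route buys, conversely, is elementarity and flexibility: it needs no operator representation or spectral theory, only weak compactness and limit passage, and that pattern of argument survives when the quotient is not a ratio of quadratic forms, whereas your spectral argument is tied to the quadratic structure of $\mathcal I$ and $D$.
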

{\em Proof.} Let us begin to show that the functional $\mathcal  F$ is bounded above. To this end, we recall the inequality of Poincar\'e:
$$
\|\Bf u\|_2\le \gamma_0\|\nabla\Bf u\|_2\,,\ \ \|\sigma\|_2
\le \gamma_0\|\nabla\sigma\|_2\,,
$$
and that of Ladyzhenskaya:
$$
\|\Bf u\|_4\le \gamma_0\|\nabla\Bf u\|_2\,,\ \ \|\sigma\|_4\le \gamma_0\|\nabla\sigma\|_2\,,
$$
where the constant $\gamma_0$ can be taken independent of $\mathcal D$ \cite{PFRT}.
Next, denoting by $\mathcal I=\mathcal I(\Bf u,\sigma,\Pr,\Ra)$ the numerator in the functional $\mathcal F$, by the latter and H\"older inequality we get
$$\begin{array}{rl}\medskip
|\mathcal I|&\!\!\!\le c(\mathcal B)\,\left[\sqrt{\Ra}\left (\|\nabla\Bf u\|_2^2+\|\nabla\sigma\|_2^2+\|\Bf u\|_4\|\sigma\|_4\|\nabla\tau_0\|_2\right)+\frac1\Pr\|\Bf u\|_4^2\|\nabla\Bf v_0\|_2\right]
\\
&\!\!\! \le c(\mathcal B,\Bf v_0,\tau_0) \left[\,\sqrt{\Ra}(\|\nabla\Bf u\|_2^2+\|\nabla\sigma\|_2^2)+\frac1\Pr\,\|\nabla\Bf u\|_2^2\right]\,,
\end{array}
$$
which proves the desired boundedness. We now rescale $\Bf u$ and $\Bf \sigma$ by setting
$$
\Bf w=\frac{\Bf u}{\|\nabla\Bf u\|_2^2+\|\nabla\sigma\|_2^2}\,,\ \ \zeta=\frac{\sigma}{\|\nabla\Bf u\|_2^2+\|\nabla\sigma\|_2^2}
$$
so that, denoting by $\{\Bf w_k,\zeta_k\}$ a maximizing sequence, we have
$$
\lim_{k\to\infty}\mathcal I(\Bf w_k,\zeta_k,\Pr,\Ra)=\ell\,,\ \ 1=\|\nabla\Bf w_k\|_2^2+\|\nabla\zeta_k\|_2^2:=D(\Bf w_k,\zeta_k)
$$
with
$$
\ell=\sup_{\footnotesize {\begin{array}{cc}(\mathbf{u},\sigma)\in\mathcal{W}^{1,2} \times W_0^{1,2}\\
D(\Bf w,\zeta)=1\end{array}}
}\mathcal{I}(\mathbf{w},\zeta,\Pr,\Ra)\,.
$$
Because of Poincar\'e inequality and the normalization condition, the sequence $\{\Bf w_k,\zeta_k\}$ is bounded in $\mathcal W^{1,2}(\Omega)\times W^{1,2}_0(\Omega)$ and, therefore,  there is $(\bar{\Bf w},\bar{\zeta})\in \mathcal W^{1,2}(\Omega)\times W^{1,2}_0(\Omega)$ such that
\begin{equation}\label{conv}\begin{array}{ll}\medskip
\{\Bf w_k,\zeta_k\}\to (\bar{\Bf w},\bar{\zeta})\,,\ \ \mbox{weakly  in $\mathcal W^{1,2}(\Omega)\times W_0^{1,2}(\Omega)$}\\
\{\Bf w_k,\zeta_k\}\to (\bar{\Bf w},\bar{\zeta})\,,\ \ \mbox{strongly  in $[L^q(\Omega)]^2\times L^q(\Omega)$\,,\ for all $q\in[1,\infty)$.}
\end{array}
\end{equation}
Employing these convergence properties, it is then simple to show that
\begin{equation}
\ell=\lim_{k\to\infty}\mathcal I(\Bf w_k,\zeta_k,\Pr,\Ra)=\mathcal I(\bar{\Bf w},\bar{\zeta},\Pr,\Ra)\,.\label{sept}
\end{equation}
Let us now prove that
\begin{equation}
D(\bar{\Bf w},\bar{\zeta}):=\|\nabla\bar{\Bf w}\|_2^2+\|\nabla\bar{\zeta}\|_2^2=1.
\label{n1}
\end{equation}
We shall treat the case $\ell>0$, because it is the only one relevant from a physical viewpoint. Since $D(\Bf w_k,\zeta_k)=1$, for all $k\in\mathbb N$, it follows that
\begin{equation}\label{D}
D(\bar{\Bf w},\bar{\zeta})\le1\,.
\end{equation}
Moreover, by definition of lower upper bound, we must have that there is $\bar{k}\in\mathbb N$ such that
$$
\mathcal I(\Bf w_k,\zeta_k,\Pr,\Ra)\ge \frac{I(\bar{\Bf w},\bar{\zeta},\Pr,\Ra)}{D(\bar{\Bf w},\bar{\zeta})}\,,\ \ \mbox{for all $k\ge\bar{k}$.}
$$
Thus, passing to the limit $k\to\infty$ and using \eqref{sept} we deduce
$$
\mathcal I(\bar{\Bf w},\bar{\zeta},\Pr,\Ra)\ge \frac{I(\bar{\Bf w},\bar{\zeta},\Pr,\Ra)}{D(\bar{\Bf w},\bar{\zeta})}
$$
which implies $D(\bar{\Bf w},\bar{\zeta})\ge1$. The latter
 in combination with \eqref{D} implies \eqref{n1}. We have thus shown that there is $(\bar{\Bf w},\bar{\zeta})\in \mathcal W^{1,2}(\Omega)\times W_0^{1,2}(\Omega)$ such that
$$
\ell=\frac{I(\bar{\Bf w},\bar{\zeta},\Pr,\Ra)}{D(\bar{\Bf w},\bar{\zeta})}\,,
$$
which completes the proof of the proposition.\hfill$\square$\smallskip\par
In view of the above proposition, from \eqref{energy} we infer that the perturbation energy $E=E(t)$ is a non-increasing function of time, and hence $(\textbf v_0,\tau_0)$ is stable, provided
\begin{equation}
\label{stab}
M(\Pr,\Ra)\le1\,.
\end{equation}
In particular --by taking $E(t)$ independent of time in \eqref{energy}--  condition \eqref{stab} implies that $(\textbf v_0,\tau_0)$ is the {\em only } steady-state solution to \eqref{Tw}. As a consequence, we deduce that {\em breaking of even symmetry may occur only if $M(\Pr,\Ra)>1$.}
\par
As is well known, the evaluation of $M$ is performed by solving the Euler-Lagrange equations associated to the maximum problem \eqref{max}. More precisely,  consider the following eigenvalue problem
\begin{equation}\label{eig}\begin{array}{cc}\medskip\smallskip\left\{\begin{array}{ll}\medskip
\nabla\cdot\Bf u=0\\
\medskip
\Delta{\bf u}-\nabla p=\lambda\,\left(\sigma\,\nabla\tau_0-\sigma{\bf e}_3-\sigma\,\frac{{\bf e_r}}{r\mathcal B}\right)+\frac1{\Pr}{\bf u}\cdot \Bf D(\Bf v_0)\\
\Delta\sigma=\lambda\left(\Bf u\cdot\nabla\tau_0-u^z-\frac{u^r}{r\mathcal B}\right)\,,
\end{array}\right.\\
\textbf{u}(R_{i}, \varphi)= \textbf{u}(R_{o}, \varphi)=\textbf{0} \qquad
\sigma(R_{i}, \varphi)= \sigma(R_{o}, \varphi)=0\,,
\end{array}
\end{equation}
where $\Bf D(\Bf v_0)=\frac12(\nabla\Bf v_0+\nabla\Bf v_0^\top)$.
From the general theory given in \cite{GaSt}, it follows that \eqref{eig} has a denumerable number of increasing eigenvalues clustering at infinity, once the relevant operators are suitably defined. Denoting by $\lambda_c$ the least eigenvalue, one can then show that $M=2/\lambda_c$ \cite[Lemma 4]{GaSt}. Moreover, an eigenfunction corresponding to $\lambda_c$, say $(\Bf u_c,\tau_c)$, has an important physical meaning, in that if $M>1$ (so we are outside the stability region according to energy theory) and we take in \eqref{Tw} $(\Bf u_c,\tau_c)$ as initial data, the perturbation energy $E$ is initially {\em strictly increasing}. In other words, $(\Bf u_c,\tau_c)$ is the ``most dangerous" perturbation for the onset of instability. The symmetry properties of this eigenfunction can be investigated at least in the limit of small $\mathcal D$ or, equivalently, large $\mathcal B$, according to the following argument.  Set $\varepsilon:=1/\mathcal B$ and assume that the steady-state solution  $(\Bf v_0(\varepsilon),\tau_0(\varepsilon))$ is a sufficiently smooth function of $\varepsilon$. From Theorem \ref{teo} it follows that $(\Bf v_0(0),\tau_0(0))=(\Bf 0,0)$. If we formally evaluate \eqref{eig} at $\varepsilon=0$ we get
\begin{equation}\label{eig0}\begin{array}{cc}\medskip\smallskip\left\{\begin{array}{ll}\medskip
\nabla\cdot\Bf u(0)=0\\
\medskip
\Delta{\bf u}(0)-\nabla p(0)=-\lambda(0)\,\sigma(0)\Bf e_3\\
\Delta\sigma(0)=-\lambda(0)u_z(0)\,,
\end{array}\right.\\

[\textbf{u}(0)](0)= [\textbf{u}(0)](1, \varphi)=\textbf{0} \qquad
[\sigma(0)](0)= [\sigma(0)](1, \varphi)=0\,,
\end{array}
\end{equation}
This eigenvalue problem could be  {\em formally} addressed by the same procedure and tools employed  in B\'enard convection between two horizontal  planes (lines, in 2-D). Thus, in this analogy, after the seminal work of Reid and Harris \cite{RH}, one {\em expects} that the eigenfunction corresponding to the least eigenvalue possesses the {\em even symmetry}. If this is rigorously ascertained, then, by means of very well-known perturbative arguments, one could show that the same property continues to hold also for finite ``small" $\varepsilon$, that is, $\mathcal D$. Such a study will be the object of future work.
\medskip\par
{\bf Acknowledgments.} We would like to thank the reviewers for their very useful comments and suggestions that led to a rather improved version of the original manuscript.

\hspace{3mm}

\noindent
Antonino De Martino,
Dipartimento di Matematica \\ Politecnico di Milano\\
Via Bonardi n.~9\\
20133 Milano\\
Italy

\noindent
\emph{email address}: antonino.demartino@polimi.it\\
\vspace{4mm}

\noindent
Arianna Passerini,
Department of Mathematics and Computer Science\\ University of Ferrara\\
Via Machiavelli 30,
40121 Ferrara\\
Italy

\noindent
\emph{email address}: ari@unife.it\\

\end{document}